\def\ps@pprintTitle{%
    \let\@oddhead\@empty
    \let\@evenhead\@empty
    \let\@oddfoot\@empty
    \let\@evenfoot\@empty}
\newcommand{\bea}{\begin{equation} \begin{aligned}}
\newcommand{\eea}{ \end{aligned}\end{equation}}
\newcommand{\beas}{\begin{equation*} \begin{aligned}}
\newcommand{\eeas}{ \end{aligned}\end{equation*}}
\newcommand{\beq}{\begin{equation}}
\newcommand{\eeq}{\end{equation}}
\newcommand{\beqs}{\begin{equation*}}
\newcommand{\eeqs}{\end{equation*}}
\newcommand{\bpm}{\begin{pmatrix}}
\newcommand{\epm}{\end{pmatrix}}
\newcommand{\bbm}{\begin{bmatrix}}
\newcommand{\ebm}{\end{bmatrix}}
\newtheorem*{standing_assumption}{Standing Assumption}
\newtheorem{theorem}{Theorem}
\newtheorem{corollary}{Corollary}
\newtheorem{lemma}{Lemma}
\newtheorem{remark}{Remark}
\newtheorem{assumption}{Assumption}
\begin{document}

\setlength{\bibsep}{0pt plus 0.3ex}

\begin{frontmatter}



\title{Consensus approximation and impulsive control\\ for a class of uncertain multi-agent dynamics}

\author[utcn]{Zolt\'an Nagy}
\ead{Zoltan.Nagy@aut.utcluj.ro}

\author[utcn,rac]{Lucian Busoniu}
\ead{Lucian.Busoniu@aut.utcluj.ro}

\author[utcn,cran]{Irinel Constantin Morărescu}
\ead{Constantin.Morarescu@univ-lorraine.fr}

\affiliation[utcn]{
  organization={Department of Automation, Technical University of Cluj Napoca},  
  city={Cluj Napoca},  
  country={Romania}
}

\affiliation[rac]{
  organization={Correspondent Member of the Romanian Academy},  
  city={Bucharest},  
  country={Romania}
}

\affiliation[cran]{
  organization={Université de Lorraine, CNRS, CRAN, F-54000},  
  city={Nancy},  
  country={France}
}

\begin{abstract}
This paper studies a class of consensus dynamics where the interactions between agents are affected by a time-varying unknown scaling factor. This situation is encountered in the control of robotic fleets over a wireless network or in opinion dynamics where the confidence given to the peers varies in time. Firstly, we establish conditions under which practical upper and lower bounds on the consensus value can be determined.  Secondly, we propose control strategies for allocating a given control budget to shift agent states towards a desired consensus value despite the uncertainty. We provide computationally efficient linear programming-based approaches for both problems and validate the obtained results in numerical simulations.
\end{abstract}



\begin{keyword}
consensus dynamics\sep uncertain systems \sep control of the consensus value\sep viral marketing


\end{keyword}

\end{frontmatter}



\section{Introduction}
\label{sec:introduction}

Estimating and controlling the consensus value in multi-agent systems is a critical task with broad applications in fields such as marketing \cite{varma2018marketing, moruarescu2020space}, social influence \cite{cho2018dynamics, quattrociocchi2014opinion}, and distributed control \cite{mirtabatabaei2012opinion, wang2021opinion}. The interactions between the agents play a  major role in the overall asymptotic behavior but also in the mathematical analysis of the resulting model. Many of the proposed mathematical models come either from sociology \cite{degroot1974reaching,rainer2002opinion,Friedkin,martins2008continuous} or statistical physics \cite{clifford1973model, sznajd2000opinion}. Some of them lead to consensus, while others generate polarization within the network. In the simple case where the interactions between agents are linear and the interaction network is fixed and connected, the consensus value can be found as the inner product between the initial opinion vector and the left eigenvector associated with the zero eigenvalue of the network’s Laplacian matrix, see for instance \cite{moruarescu2020space}. However, in the general case where the interactions are time-varying or nonlinear the exact consensus value is rarely representable in closed form. 
For example, in \cite{olfati2004consensus}, consensus estimation is addressed for balanced directed graphs. Due to the balanced structure, the consensus value in that case can be computed as the average of the initial values, a property that holds even under switching topologies provided the graph remains balanced at all times. A more general framework is introduced in \cite{ren2005consensus}, where it is shown that, even with switching topologies, a consensus value exists and the agent opinions will converge to it, as long as the switching is sufficiently frequent and the union of the graphs over time contains a spanning tree. The exact value of the consensus remains, however, generally intractable \cite{ren2005consensus}.

While computing the exact consensus value is generally not feasible, it is possible to determine a range within which this value lies. However, existing results in the literature typically offer very conservative bounds. For instance, in \cite{ren2005consensus, chowdhury2016continuous}, the consensus value is bounded between the minimum and maximum of the agents’ initial states, which is 
uninformative in many practical cases.
A more refined approach is presented in \cite{martin2019consensus}, where time-varying interactions are considered for continuous-time opinion dynamics, and a less restrictive bounding method is proposed. Nevertheless, the method has numerical limitations hampering its application in large networks.

The control of the consensus value has previously been addressed in the context of opinion dynamics \cite{varma2018marketing, moruarescu2020space, etesami2021open, jond2024differential, alkhorshid2024saturated}. In \cite{varma2018marketing}, a game-theoretic framework is employed to identify the Nash equilibrium in a duopoly setting, where competing marketers seek optimal strategies to sway public opinion. In \cite{etesami2021open}, competitive influence maximization under the DeGroot model \cite{degroot1974reaching} is analyzed, and it is shown that optimal budget allocations can be computed efficiently. In \cite{alkhorshid2024saturated}, a decentralized control strategy is developed for multi-agent systems with state and input constraints, where Lyapunov-based methods are used to design feedback laws.
Reference \cite{moruarescu2020space} focuses on the optimal allocation of a marketing budget to steer agents’ opinions toward a desired target.
All these works assume complete model knowledge, while some -- like \cite{moruarescu2020space} -- further restrict to linear dynamics.
Such assumptions fail to capture the inherent uncertainties present in real-world interactions. 

To address the conservativeness of existing consensus bounds as well as the lack of consensus control for uncertain dynamics, we consider multi-agent systems with time-varying uncertain interactions and propose consensus approximation results and a consensus-influencing control strategy for such systems.
In the cooperative robotics context, our model can handle variations of the communication strength in wireless networks \cite{loheac2022time}. In social networks, our model is an extension of \cite{chowdhury2016continuous}, with an important generalization: instead of incorporating a single type of stubbornness, we introduce uncertainty into the agents' dynamics, covering a much larger class of opinion updates. More precisely, in our model, for each agent, the usual linear influence of the neighbors is multiplied by a bounded, uncertain term, reflecting fluctuations over time in individuals’ susceptibility to being influenced by others. 

Our first key contribution is an approach to estimate the consensus value by computing less conservative lower and upper bounds than in \cite{ren2005consensus,chowdhury2016continuous}. The problem of computing the bounds is formulated as nonlinear optimization and then converted into a linear program (LP) for an efficient solution.
In the second key contribution, we exploit our bounds to derive an analytically grounded method to influence agent opinions. Based on \cite{moruarescu2020space}, we combine the discrete-time dynamics of the agents with marketing campaigns modeled as control inputs that impulsively influence the agents' initial opinions toward a target value. The campaign budget is limited: there is an upper bound on the total budget, as well as an upper bound on the maximum input that can be applied to any individual agent. 
Our objective in allocating the marketing budget is to minimize the difference between the consensus value and the target value. Since the consensus value cannot be directly determined, we approximate it by our earlier determined lower or upper bounds, and the budget is then allocated so as to minimize the difference between these bounds and the target. A nonlinear optimization problem results, which we again relax to an LP problem, to allow efficient computation. To validate both the consensus bounds and the budget allocation strategy, we conduct experiments on scale-free networks, which are commonly observed in real-world social systems \cite{barabasi1999emergence}.
For comparative analysis, we use the allocation method from \cite{moruarescu2020space}, which achieves optimal budget allocation in the linear case but provides no guarantees in this nonlinear~setting.




\color{black}
The paper is organized as follows. Section~\ref{sec:problem_statement} introduces the model and outlines the consensus value approximation problem. Section~\ref{sec:consensus_bounds} presents the design and analysis of consensus bounds, along with the proposed optimization framework and supporting experimental results. Section~\ref{sec:consensus_control} addresses the control problem of marketing budget allocation and illustrates the effectiveness of the approach through various scenarios. Section~\ref{sec:conclusions} concludes the paper and outlines potential directions for future research.





\textbf{Notations:} The symbol $I$ denotes the identity matrix of appropriate dimension, while  $\mathbf{1}$ and $\mathbf{0}$ refer to column vectors of ones and zeros, respectively. 
Notation $\phi$ refers to a column vector with elements $\phi = [\phi_1\,\, \phi_2 \,\, ...\,\, \phi_n]^T$, and we use $\mathrm{diag}(\phi)$ to refer to a diagonal matrix with  elements $\phi_i$, $i=1,...,n$ on the diagonal.

\section{Problem statement}
\label{sec:problem_statement}

Consider a set $V=\{1,...,n\}$ of $n$ agents interacting over a network described by a directed graph given by a weighted adjacency matrix $A$, whose components $a_{ij}\in [0, \, 1]$ $\forall$ $i,\,j\in V$ refers to the nominal connection strength from agent $i$ to agent $j$. We assume that $a_{ii}=0,\ \forall i\in V$ and consider the following agent dynamics:
\bea
x_i(k+1) = x_i(k)+\gamma_i(k)\sum_{j=1}^n a_{ij}(x_j(k)-x_i(k))
\label{eq:coca_model}
\eea
where scalar $x_i(k)$ refers to the opinion (state) of agent $i$ at discrete time step $k$. The terms $\gamma_i(\cdot)$ for $i\in V$ denote the time-varying uncertainties on the interaction strength, which are assumed to be uniformly bounded:
\begin{equation}
\underline{\gamma}_i \leq \gamma_i(\cdot) \leq \overline{\gamma}_i, 
\quad \forall \,i\in V
\label{eq:bounded_nonlin}
\end{equation}
where $\underline{\gamma}_i=\frac{\underline{\omega}}{n_i}$ and $\overline{\gamma}_i = \frac{\overline{\omega}}{n_i}$, with $n_i$ denoting the number of neighbors of agent $i$ (number of nonzero values $a_{ij}$, with $j\in V$), while $0 < \underline{\omega} \leq \overline{\omega} \leq 1$ are given~constants. The scaling factor is divided by $n_i$, so that the effect on each agent is normalized by its number of neighbors.
Using the notation $x(k) = [x_1(k),\, x_2(k)\,,...,\, x_n(k)]^T$ and $\gamma(k) = [\gamma_1(k), \, \gamma_2(k),\,...\,,\, \gamma_n(k)]^T$ we rewrite \eqref{eq:coca_model} as follows:
\bea
x(k+1) &= (I+\mathrm{diag}\big(\gamma(k)\big) L) x(k)
\label{eq:coca_model_matrix_form}
\eea
where $L$ represents the Laplacian matrix of the graph, with $l_{ij} = -a_{ij}$ for $i\neq j$ and $l_{ii}=\sum_{i\neq j} a_{ij}$. 

An example of dynamics \eqref{eq:coca_model_matrix_form} is the continuous opinions - continuous actions (COCA) model, presented in \cite{chowdhury2016continuous}. There, the nonlinear function $\gamma_i(\cdot)$ depends on the opinion of the agent and is called stubbornness:
\bea
\label{eq:COCA_stubborness}
\gamma_i(x_i(k)) =& \frac{x_i(k)(1-x_i(k))}{n_i}
\eea
In this case, we have $\underline{\omega} = 0$ and $\overline{\omega} = 0.25$. Note that $\overline{\omega}$ is obtained when $n_i = 1$ and the agent's opinion is $ 0.5$. Furthermore, $\underline{\omega} = 0$ occurs when the opinion of agent $i$ is either $0$ or $1$. In such extreme cases, the agent's opinion remains fixed and is not influenced by any of its neighbors. To avoid these scenarios, we follow \cite{chowdhury2016continuous} by requiring that the constant $\underline{\omega} > 0$.


For the model to achieve consensus, we impose the following standing assumption throughout this study.

\begin{standing_assumption}
\label{assu:strongly_connected}
The network of agents is strongly connected, implying that the Laplacian matrix $L$ has a single $0$ eigenvalue and that all the other eigenvalues are positive.
\end{standing_assumption}
Based on this, we can formulate the following lemma.


\begin{lemma}
\label{prop:conv_to_cons}
The nonlinear opinion dynamics \eqref{eq:coca_model_matrix_form} asymptotically converge to a consensus value, which we denote by $\alpha$.
\end{lemma}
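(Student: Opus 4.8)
The plan is to read \eqref{eq:coca_model_matrix_form} as a linear time-varying recursion $x(k+1)=P(k)\,x(k)$ governed by a \emph{row-stochastic} matrix and then to invoke the classical theory of backward products of stochastic matrices. First I would verify that, for every admissible realization of the uncertainty satisfying \eqref{eq:bounded_nonlin}, the one-step matrix $P(k)$ is row-stochastic with nonnegative entries. Reading off the coefficients from the scalar form \eqref{eq:coca_model}, the diagonal entries are $1-\gamma_i(k)\sum_{j}a_{ij}$ and the off-diagonal entries are $\gamma_i(k)\,a_{ij}$. The off-diagonal entries are nonnegative, and wherever $a_{ij}>0$ they are bounded below by the uniform positive constant $\underline{\gamma}_i a_{ij}$ because $\gamma_i(k)\ge\underline{\gamma}_i>0$ (here the assumption $\underline{\omega}>0$ is essential); the diagonal entries are nonnegative because $\gamma_i(k)\sum_j a_{ij}\le\overline{\omega}\le 1$; and each row sums to $1$ since $L\mathbf{1}=\mathbf{0}$. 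The key structural fact, made uniform by $\underline{\omega}>0$, is that the off-diagonal support of $P(k)$ equals the fixed interaction graph at every step, with weights bounded away from zero.

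Second, I would exploit that each row of $P(k)$ defines a convex combination. Writing $x_i(k+1)=\sum_j P_{ij}(k)\,x_j(k)\in[m(k),M(k)]$ with $M(k):=\max_i x_i(k)$ and $m(k):=\min_i x_i(k)$, it follows that $M(k)$ is nonincreasing and $m(k)$ is nondecreasing. Both sequences are monotone and bounded, hence convergent, so it only remains to show that the gap $M(k)-m(k)$ tends to zero, which is exactly consensus.

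Third, for the contraction I would pass to the $N$-step transition matrix $\Phi(k):=P(k+N-1)\cdots P(k)$ with $N=n-1$. The Standing Assumption (strong connectivity) guarantees a directed path of length at most $n-1$ between any ordered pair of agents, while the positive diagonal entries of $P(k)$ let a path ``wait'' at intermediate nodes, so $\Phi(k)$ has all entries strictly positive and, crucially, bounded below by a constant $\delta>0$ that is independent of $k$ and of the particular uncertainty sequence. Applying a Dobrushin ergodicity-coefficient estimate (equivalently, the Hajnal inequality) to $\Phi(k)$ then yields $M(k+N)-m(k+N)\le\rho\,\big(M(k)-m(k)\big)$ for some $\rho\in[0,1)$, so the gap decays geometrically. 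Combined with the monotone convergence above, $M(k)$ and $m(k)$ share a common limit $\alpha$ and every $x_i(k)\to\alpha$, proving the lemma.

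The hard part will be the uniform positivity claim underlying the contraction: one must bound the entries of $\Phi(k)$ below by a single $\delta>0$ simultaneously over all time indices \emph{and} all admissible realizations of $\gamma$, and in particular secure strictly positive diagonal entries (aperiodicity) from the parameter constraints $0<\underline{\omega}\le\overline{\omega}\le1$, taking care of the boundary case $\overline{\omega}=1$. Once this uniform bound is established, the contraction and the conclusion follow along the lines of the convergence results for products of stochastic matrices in \cite{ren2005consensus}, so the only genuinely new ingredient is making the estimates uniform in the uncertainty.
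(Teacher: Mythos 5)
Your proof is correct in substance but takes a genuinely different and far more self-contained route than the paper. The paper's entire proof is a two-sentence appeal to Lemma 2.1 of \cite{ren2005consensus}: since $\gamma_i(k)\ge\underline{\gamma}_i>0$ never annihilates an edge, the interaction graph coincides with the fixed strongly connected one at every step, hence trivially contains a spanning tree, and the cited switching-topology result gives consensus. You instead reprove the underlying ergodicity statement from scratch: row-stochasticity of the one-step matrix read off from \eqref{eq:coca_model}, monotonicity of $\max_i x_i(k)$ and $\min_i x_i(k)$, uniform positivity of the $(n-1)$-step products, and a Dobrushin--Hajnal contraction of the spread. What your version buys is precisely the uniformity that the citation glosses over: the discrete-time results of \cite{ren2005consensus} are stated for matrices whose nonzero entries are bounded away from zero and whose diagonal entries are positive, whereas here $\gamma(k)$ ranges over a continuum, so one genuinely needs the uniform lower bounds $\underline{\gamma}_i a_{ij}$ that you isolate; the price is length. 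The caveat you flag at the end deserves emphasis rather than dismissal: if $\overline{\omega}=1$ and the nonzero weights equal $1$, the diagonal entry $1-\gamma_i(k)\sum_j a_{ij}$ can vanish, and the lemma itself can fail --- two agents with $a_{12}=a_{21}=1$, $n_1=n_2=1$ and $\gamma_1=\gamma_2=1$ produce a pure swap $x_1(k+1)=x_2(k)$, $x_2(k+1)=x_1(k)$ that never reaches consensus. So a strict bound such as $\overline{\omega}<1$ (or $\gamma_i(k)\sum_j a_{ij}<1$) is actually needed; this hypothesis is also hidden in the paper's citation, since the invoked result assumes positive self-weights. Apart from making that assumption explicit, your contraction argument goes through and is, if anything, the more rigorous of the two.
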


\begin{proof}
Lemma 2.1 in \cite{ren2005consensus} states that for time-varying switching topology dynamics, the system converges to a consensus value $\alpha$ if the union of the directed interaction graphs over certain time intervals contains a spanning tree.
The network in \eqref{eq:coca_model_matrix_form} is and remains strongly connected because $\mathrm{diag}(\gamma(k))$ never sets the connection weights to $0$ due to \eqref{eq:bounded_nonlin}. This ensures a spanning tree is always present.
\end{proof}

In the nonlinear case, as noted in~\cite{ren2005consensus}, the consensus value generally cannot be determined in closed form. Nonetheless, it is possible to establish practical bounds. We denote lower and upper bounds of \(\alpha\) by \(\alpha_{\min}\) and \(\alpha_{\max}\), respectively:
\bea
\alpha_{\min}\leq \alpha \leq \alpha_{\max}
\label{eq:alpha_min_max_basic}
\eea
Previous works \cite{olfati2004consensus}-\cite{chowdhury2016continuous} have presented highly conservative solutions, where the bounds are given by  

\bea
\alpha_{\min} = \min(x(0)), \quad \alpha_{\max} = \max(x(0))
\eea

However, such bounds provide limited practical insight. A key contribution of our present work lies in the development of a less restrictive method to determine lower and upper bounds for consensus, given next.

\section{Consensus bounds}
\label{sec:consensus_bounds}

Next, Section \ref{subsec:bounds_analytical} presents some essential analytical results, Section \ref{sec:Optimization} provides a computationally efficient solution to find consensus bounds in \eqref{eq:alpha_min_max_basic}, while Section \ref{sec:assump_differentiate} presents a set of examples illustrating the characterization power of these bounds.

\subsection{Analytical results}
\label{subsec:bounds_analytical}
For strongly-connected networks with linear opinion dynamics, the consensus value is given by $\alpha = \nu^Tx(0)$, where $\nu$ is the left eigenvector corresponding to the eigenvalue \(0\) of the Laplacian $L$ \cite{ren2005consensus}. For the model described by~\eqref{eq:coca_model_matrix_form}, such a closed-form solution is not available due especially to the time variation of the eigenvector $\nu$.
Let us point out that the normalized left-eigenvector corresponding to the zero eigenvalue at any given step $k$ can still be explicitly characterized.
For any $\gamma(k)$, consider the matrix $\mathrm{diag}(\gamma(k))L$; a zero eigenvalue always exists at any step \( k \) due to the strong connectivity of the network. We denote the normalized eigenvector corresponding to the $0$ eigenvalue by $\nu_{\gamma(k)}$, so that $\nu_{\gamma(k)}^T\mathbf{1}=1$. 
By definition of the eigenvector, we have:

\bea
\underbrace{\nu_{\gamma(k)}}_{\text{eigenvector of $\mathrm{diag}(\gamma(k)) L$}}^T\mathrm{diag}(\gamma(k)) L = \mathbf{0}^T
\label{eq:eigen_vector_prop_COCA}
\eea
Since $\mathrm{diag}(\gamma(k))$ is diagonal, $\nu_{\gamma(k)}^T\mathrm{diag}(\gamma(k))$ is  left eigenvector corresponding to eigenvalue $0$ of $L$, so 

\beqs
\underbrace{\nu_{\gamma(k)}^T\mathrm{diag}(\gamma(k))}_{\text{eigenvector of $L$}} L = \mathbf{0}^T
\eeqs
Knowing that $L$ has a unique normalized left eigenvector $\nu$ corresponding to eigenvalue $0$, one gets that the vector
$\nu^T_{\gamma(k)}\mathrm{diag}(\gamma(k))$ is a scalar multiple of $\nu^T$ which writes as
\beqs
\nu_{\gamma(k)}^T\mathrm{diag}(\gamma(k))
= q(k)\nu^T
\eeqs
where $q(k)$ denotes the scaling term for each $k$. We examine $\nu_{i\,\gamma_i(k)}$, the $i$-th element of $\nu_{\gamma(k)}$:
\bea
\nu_{i\,\gamma_i(k)}\gamma_i(k)=q(k)\nu_i, \quad \forall i\in V 
\label{eq:nu_i}
\eea
Since $\gamma_i(k)>0$ for all $i=1,...,n$, one obtains
\beqs
    \nu_{i\,\gamma_i(k)} = \frac{\nu_i}{\gamma_i(k)}q(k), \quad \forall i\in V
\eeqs
The normalization $\nu_{\gamma(k)}^T\mathbf{1} = 1$ yields
\beqs
1 = \sum_{i=1}^n \nu_{i\gamma_i(k)} = \sum_{i=1}^n  \frac{\nu_i}{\gamma_i(k)}q(k)
\eeqs
leading to the expression
\bea
q(k) = \frac{1}{\sum_{i=1}^n\frac{\nu_i}{\gamma_i(k)}} = \frac{1}{\nu^T\mathrm{diag}\Big(\frac{1}{\gamma(k)}\Big)\textbf{1}}
\label{eq:nu_almost_done}
\eea
Finally, from \eqref{eq:nu_i} and \eqref{eq:nu_almost_done}, we obtain:
\bea
\nu_{\gamma(k)}^T =  \frac{\nu^T\mathrm{diag}\Big(\frac{1}{\gamma(k)}\Big)}{\nu^T\mathrm{diag}\Big(\frac{1}{\gamma(k)}\Big)\mathbf{1}}
\label{eq:left_eigen_v}
\eea
In what follows, we introduce the following two auxiliary dynamics:
\bea
\underline{x}(k+1) =& (I-\mathrm{diag}(\gamma_*)L)\underline{x}(k)
\label{eq:coca_lower_dynamics}
\eea
\bea
\overline{x}(k+1) =& (I-\mathrm{diag}(\gamma^*) L)\overline{x}(k)
\label{eq:coca_upper_dynamics}
\eea
where $\underline{x}(0) = \overline{x}(0)= x(0)$, and the terms \(\gamma^*\) and \(\gamma_*\) are taken such that:
\bea
\underline{\nu}^Tx(0) = \frac{\nu^T\mathrm{diag}\Big(\frac{1}{\gamma_*}\Big)}{{\nu^T\mathrm{diag}\Big(\frac{1}{\gamma_*}\Big)\mathbf{1}} } x(0) = \underset{\gamma}{\min}\, \bpm \frac{\nu^T\mathrm{diag}\Big(\frac{1}{\gamma}\Big)}{{\nu^T\mathrm{diag}\Big(\frac{1}{\gamma}\Big)\mathbf{1}} } x(0)\epm\\
\\
\overline{\nu}^Tx(0) = \frac{\nu^T\mathrm{diag}\Big(\frac{1}{\gamma^*}\Big)}{{\nu^T\mathrm{diag}\Big(\frac{1}{\gamma^*}\Big)\mathbf{1}} }x(0) = \underset{\gamma}{\max}\, \bpm \frac{\nu^T \mathrm{diag}\Big(\frac{1}{\gamma}\Big)}{{\nu^T\mathrm{diag}\Big(\frac{1}{\gamma}\Big)\mathbf{1}} }x(0)\epm
\label{eq:optimization_problem}
\eea
In the $\min$ and $\max$ arguments, $\gamma_i\in[\underline{\gamma}_i,\, \overline{\gamma}_i]$ for $i\in V$.   
Thus, $\underline{\nu}$ and $\overline{\nu}$ represent the normalized left eigenvectors corresponding to eigenvalue $0$ of $\mathrm{diag}(\gamma_*)L$ and  $\mathrm{diag}(\gamma^*)L$ respectively. \\
We also introduce the following auxiliary variables:
\beqs
\underline\theta(k) := \underline\nu^T{p}(k), \quad
\overline\theta(k) := \overline\nu^T{p}(k)
\eeqs
Since $\overline{\nu}^\top \mathbf{1} = \underline{\nu}^\top \mathbf{1} =1$, it follows that $\underline\theta(k)$ and $\overline\theta(k)$ are weighted averages of the agents' opinions at time $k$. Therefore, we have:
\bea
\label{eq:theta_under_over_converges_beta}
\lim_{k\rightarrow \infty} \underline\theta(k) =
\lim_{k\rightarrow \infty} \overline\theta(k) = \alpha
\eea
The setup under consideration is completed by the following condition.
\begin{assumption}
\label{ass:bounded_projection_condition}
For all steps \( k \):
\begin{align}
\underline{\nu}^T \, \mathrm{diag}\big(\gamma(k)\big) \, L \, x(k) &\leq 0 \label{eq:lower_deriv_condit} \\
\overline{\nu}^T \, \mathrm{diag}\big(\gamma(k)\big) \, L \, x(k) &\geq 0 \label{eq:upper_deriv_condit}
\end{align}
\end{assumption}

We illustrate later on in Section \ref{sec:assump_differentiate} that in most practical cases this assumption holds. We are now ready to formulate our first main result.
\begin{theorem}
\label{th:lower_upper_consensus}
Under Assumption \ref{ass:bounded_projection_condition}, the following lower and upper bounds on the consensus value hold:
\bea
\alpha_{\max} = \overline{\nu}^Tx(0),
\quad \alpha_{\min} = \underline{\nu}^Tx(0)
\label{eq:alpha_bounds}
\eea
\end{theorem}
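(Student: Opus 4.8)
The plan is to prove the two inequalities $\underline{\nu}^T x(0) \le \alpha$ and $\alpha \le \overline{\nu}^T x(0)$ by a monotonicity argument on the scalar sequences $\underline{\theta}(k) = \underline{\nu}^T x(k)$ and $\overline{\theta}(k) = \overline{\nu}^T x(k)$ (reading the $p(k)$ in the definition of $\underline\theta,\overline\theta$ as the genuine trajectory $x(k)$). Two facts about these sequences are already available: each is a convex combination of the agent states, since $\underline{\nu}^T\mathbf{1} = \overline{\nu}^T\mathbf{1} = 1$, and both converge to the consensus value, $\lim_{k\to\infty}\underline{\theta}(k) = \lim_{k\to\infty}\overline{\theta}(k) = \alpha$, by \eqref{eq:theta_under_over_converges_beta} (itself a consequence of Lemma~\ref{prop:conv_to_cons}). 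Hence it suffices to show that $\underline{\theta}$ is non-decreasing and $\overline{\theta}$ is non-increasing along the trajectory: a non-decreasing sequence with limit $\alpha$ satisfies $\underline{\theta}(0)\le\alpha$, and a non-increasing one satisfies $\overline{\theta}(0)\ge\alpha$, which are exactly the claimed bounds since $\underline{\theta}(0)=\alpha_{\min}$ and $\overline{\theta}(0)=\alpha_{\max}$.

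The core computation is the one-step increment. From the model \eqref{eq:coca_model}, together with the Laplacian convention $l_{ij}=-a_{ij}$, the update reads $x(k+1)-x(k) = -\mathrm{diag}(\gamma(k))\,L\,x(k)$. Projecting onto the fixed vector $\underline{\nu}$ gives
\[
\underline{\theta}(k+1)-\underline{\theta}(k) = -\,\underline{\nu}^T\,\mathrm{diag}(\gamma(k))\,L\,x(k),
\]
and symmetrically for $\overline{\theta}$ with $\overline{\nu}$ in place of $\underline{\nu}$. I would then simply invoke Assumption~\ref{ass:bounded_projection_condition}: inequality \eqref{eq:lower_deriv_condit} makes the first right-hand side non-negative, so $\underline{\theta}(k+1)\ge\underline{\theta}(k)$, while \eqref{eq:upper_deriv_condit} makes the analogous $\overline{\theta}$ increment non-positive, so $\overline{\theta}(k+1)\le\overline{\theta}(k)$, for every $k$. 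This is precisely the monotonicity required in the first paragraph.

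To close, $\underline{\theta}$ is non-decreasing and converges to $\alpha$, hence $\alpha = \sup_k \underline{\theta}(k) \ge \underline{\theta}(0) = \underline{\nu}^T x(0)$; symmetrically $\alpha = \inf_k \overline{\theta}(k) \le \overline{\theta}(0) = \overline{\nu}^T x(0)$. Substituting $\alpha_{\min}=\underline{\nu}^Tx(0)$ and $\alpha_{\max}=\overline{\nu}^Tx(0)$ yields \eqref{eq:alpha_bounds}.

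The individual steps are routine, so the only real care is bookkeeping rather than a deep obstacle. The point most likely to trip up a careless write-up is the sign alignment: the increment carries a minus sign from $-\mathrm{diag}(\gamma(k))L$, so one must verify that \eqref{eq:lower_deriv_condit} (an upper bound by $0$) really produces a non-decreasing---not non-increasing---$\underline{\theta}$, and dually for the upper bound. The second point worth stating explicitly is that the limits in \eqref{eq:theta_under_over_converges_beta} hold because $x(k)\to\alpha\mathbf{1}$ and the weights $\underline{\nu},\overline{\nu}$ sum to one, which is where Lemma~\ref{prop:conv_to_cons} and the normalization enter; everything else is the elementary fact that a monotone convergent sequence is bounded on one side by its limit.
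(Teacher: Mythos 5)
Your proof is correct and follows essentially the same route as the paper: the paper phrases the step as an induction on the cumulative bound $\overline{\theta}(k)\le\overline{\nu}^Tx(0)$, but the engine is the identical one-step computation $\overline{\theta}(k+1)=\overline{\theta}(k)-\overline{\nu}^T\mathrm{diag}(\gamma(k))Lx(k)$ together with Assumption~\ref{ass:bounded_projection_condition}, and the paper's own remark after the theorem records exactly the monotonicity you use. Your sign bookkeeping (the update is $x(k+1)-x(k)=-\mathrm{diag}(\gamma(k))Lx(k)$, consistent with \eqref{eq:coca_model} and the proof, despite the $+$ sign in \eqref{eq:coca_model_matrix_form}) and your reading of $p(k)$ as $x(k)$ both match the paper's intent.
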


\begin{proof}
Both $\underline\theta(k)$ and $\overline{\theta}(k)$ converge to $\alpha$ according to \eqref{eq:theta_under_over_converges_beta}. Therefore, we aim to prove that for $\forall k$:
\beqs
\underline{\theta}(k) \geq   \underline{\nu}^Tx(0), \quad
\overline{\theta}(k) \leq   \overline{\nu}^Tx(0)\\
\eeqs
Dynamics \eqref{eq:coca_model_matrix_form}, \eqref{eq:coca_lower_dynamics}, and \eqref{eq:coca_upper_dynamics} all start from the same initial condition $\overline{x}(0)= \underline{x}(0)=x(0)$, therefore $\underline\theta(0)= \underline{\nu}^Tx(0)$ and $\overline\theta(0)=\overline{\nu}^Tx(0)$.
We reason by induction and we consider $\overline\theta(k)\leq \overline{\nu}^Tx(0)$ holds in order to prove that $\overline\theta(k+1)\leq \overline{\nu}^Tx(0)$ follows. Straightforward computation yields
\beas
\overline{\nu}^Tx(0) -\overline{\theta}(k+1) 
&=\overline{\nu}^T x(0) -  \overline{\nu}^T  (I-\mathrm{diag}(\gamma(k))L)x(k)=\overline{\nu}^T x(0) -  \overline{\nu}^Tx(k)+ \overline{\nu}^T\mathrm{diag}(\gamma(k))Lx(k)\\
&=\overline{\nu}^T x(0) -  \overline{\theta}(k)+ \overline{\nu}^T\mathrm{diag}(\gamma(k))Lx(k)
\eeas
By the induction hypothesis $\overline{\nu}^Tx(0)-\overline\theta(k)\geq 0$ and from  Assumption~\ref{ass:bounded_projection_condition} one has $\overline{\nu}^T\mathrm{diag}(\gamma(k))Lx(k)\geq 0$. Therefore, one obtains
\beqs
\overline{\nu}^Tx(0) -\overline{\theta}(k+1)\geq 0
\eeqs
Since $\alpha_{\max}=\overline{\nu}^Tx(0)$, we have proven that $\alpha_{\max}\geq \overline{\theta}(k)$ for $\forall k$.
Furthermore, $\lim_{k\rightarrow\infty}\overline{\theta}(k)=\alpha$, so $\alpha\leq \alpha_{\max}$.\\
Similarly we obtain $\alpha_{\min}\leq \alpha$.
\end{proof}
\begin{remark}
    From the proof of Theorem~\ref{th:lower_upper_consensus}, $\underline{\theta}(k)$ monotonically increases, while $\overline{\theta}(k)$ monotonically~decreases.
\end{remark}
\begin{remark}
\label{rem:limited_cons}
Rewriting
$\alpha = \delta \alpha_{\min} + (1-\delta)\alpha_{\max}$, for some $\delta\in [0,\,1]$, we get
    \beas
     \alpha  = &\delta\underline{\nu}^Tx(0) +(1-\delta) \overline{\nu}^Tx(0) =\delta\frac{\nu^T \mathrm{diag}\Big(\frac{1}{\gamma_*}\Big)}{\nu^T \mathrm{diag}\Big(\frac{1}{\gamma_*}\Big)\mathbf{1}}x(0)+ (1-\delta) \frac{\nu^T \mathrm{diag}\Big(\frac{1}{\gamma_*}\Big)}{\nu^T \mathrm{diag}\Big(\frac{1}{\gamma_*}\Big)\mathbf{1}}x(0)
=: \frac{\nu^T \mathrm{diag}\Big(\frac{1}{\gamma}\Big)}{\nu^T\mathrm{diag}\Big(\frac{1}{\gamma}\Big)\mathbf{1}}x(0)
\eeas
where $\displaystyle\frac{1}{\gamma} = \frac{\delta}{\nu^T\mathrm{diag}\big(\frac{1}{\gamma_*}\big)}\frac{1}{\gamma_*}+\frac{{1-\delta}}{\nu^T\mathrm{diag}\big(\frac{1}{\gamma^*}\big)}\frac{1}{\gamma^*} =\left [\frac{1}{\gamma_1},...,\frac{1}{\gamma_n}\right]^T $ is a constant vector with $\gamma_i\in [\underline{\gamma_i}, \, \overline{\gamma_i}]$ for $i=1,...,n$.
\end{remark}

In the sequel, we present an optimization problem to efficiently compute $\gamma_*$ and $\gamma^*$ of \eqref{eq:optimization_problem}, thereby obtaining the bounds \eqref{eq:alpha_bounds} on the consensus value.

\subsection{Computing the bounds efficiently}
\label{sec:Optimization}


We introduce the change of variables $\phi_i := \frac{1}{\gamma_i}$, $\underline\phi_i := \frac{1}{\overline{\gamma}_i}$ and $\overline\phi_i := \frac{1}{\underline{\gamma}_i}$, allowing us to rewrite the two optimization problems in \eqref{eq:optimization_problem} as follows:

\bea 
&\underset{\phi}{\min}\, \frac{\sum_{i=1}^n\nu_i \phi_i x_i(0)}{\nu^T\phi} \\
&\underset{\phi}{\max}\, \frac{\sum_{i=1}^n\nu_i \phi_i x_i(0)}{\nu^T\phi}
\label{eq:opt_problem_proof}
\eea 

\bea
\mathrm{subject\,\, to:}\quad 
    \phi_i &\in [\underline{\phi_i},\, \overline{\phi_i}], \, \forall i=1,..,n \\
    \label{eq:opt_constr_proof}
\eea
The constraints \eqref{eq:opt_constr_proof} can be reformulated in matrix form:
\bea
A\phi\leq b,\quad \mathrm{where }\quad 
A = \bbm 
    I\\
    -I
    \ebm,\quad
b = \bbm \,\,\,\overline{\phi} \,\,\,\\ \,\,\, \underline{\phi} \,\,\,\ebm
\eea

The optimization problems defined in \eqref{eq:opt_problem_proof} subject to the constraints in \eqref{eq:opt_constr_proof} is a linear fractional programming (LFP) problem. To reformulate it as a Linear Programming (LP) problem, we employ the Charnes-Cooper transformation~\cite{cooper1962programming}:
\beas
\tau := \frac{1}{\nu^T\phi},\quad \chi := \phi\tau
\eeas
By denoting $ c:=[\nu_1\phi_1,\,...\,,\nu_n\phi_n]^T$, \eqref{eq:opt_problem_proof} and \eqref{eq:opt_constr_proof} are rewritten as:
\begin{align}
& \underset{\chi}{\min} \quad c^T \chi \label{eq:min_opt} \\
& \underset{\chi}{\max} \quad c^T \chi \label{eq:max_opt}
\end{align}
\bea
\mathrm{both \,\,subject\,\, to:}\quad 
A\chi& \leq b\tau, \quad 
\nu^T\chi& = 1,\quad
\tau&>0
\label{eq:opt_constraints}
\eea
We denote the optimal solution of \eqref{eq:min_opt} by \( \tau^* \) and \( \chi^* \), and that of \eqref{eq:max_opt} by \( \tau_* \) and \( \chi_* \). 
The optimal values $\phi^*$ and $\phi_*$ are then given by $ \phi^* = \displaystyle\frac{\chi^*}{\tau^*} $ and $ \phi_* = \displaystyle\frac{\chi_*}{\tau_*} $.
Finally, to find the solution of \eqref{eq:optimization_problem}, we set \(\gamma^*_i = \displaystyle\frac{1}{\phi_{*i}}\) and \(\gamma_{*i} = \displaystyle\frac{1}{\phi^*_{i}}\) for $i\in V$.

\begin{remark} 
\label{rem:lower_upper_values}
The only constraints in our problem are the bounds imposed on the decision variables $\phi_i$. Consequently, the optimal solution of $C(\phi)$ must occur at either the lower or upper bound of each decision variable:
\bea 
\phi_i^*, \phi_{*i} \in \{\underline{\phi_i},\overline{\phi_i}\}, \quad \forall i\in V
\eea
Whether the optimal value corresponds to the lower or upper bound depends on the initial condition.
\end{remark}



Next,  we present a set of experimental results to highlight the practical usefulness of Theorem~\ref{th:lower_upper_consensus} and optimization problems \eqref{eq:min_opt}-\eqref{eq:opt_constraints}.


\subsection{Experimental results}
\label{sec:assump_differentiate}



To generate realistic network topologies, we employ Barabási–Albert (BA) graphs \cite{barabasi1999emergence}. Since BA graphs are inherently undirected, we convert them into directed graphs by randomly removing a subset of edges, while ensuring that the resulting graph remains strongly connected. We remove $20\%$ of the edges, a proportion that is sufficiently large to ensure a meaningful directed structure, while still maintaining strong connectivity of the graph.
We evaluate our approach across three scenarios:
\begin{itemize}
    \item Uniformly random $x(0)$ with the stubbornness model for $\gamma(k)$ given in~\eqref{eq:COCA_stubborness}.
    \item Uniformly random $x(0)$ with a uniform random model for~$\gamma(k)$.
    \item $\beta$-distribution of $x(0)$ with the stubbornness model for $\gamma(k)$ given in~\eqref{eq:COCA_stubborness}. 
\end{itemize}
Through these scenarios, we aim to encompass a range of realistic and practically relevant network topologies and initial opinions. For each scenario, we generate datasets consisting of 10,000 random graphs, with the number of nodes uniformly sampled from the range  $[10, 100]$. 

\subsubsection*{Uniformly random $x(0)$ with the stubbornness model for $\gamma(k)$ given in~\eqref{eq:COCA_stubborness}}
The initial opinion of the agents, \(x(0)\), is uniformly randomly generated within the range \([0.1, 0.9]\), and the stubbornness model in~\eqref{eq:COCA_stubborness} is used. In \(98.26\%\) of the cases, the stubbornness model for \(\gamma(\cdot)\) satisfies the lower bound condition in \eqref{eq:lower_deriv_condit} of Assumption~\ref{ass:bounded_projection_condition}, while in \(98.30\%\) of the cases it satisfies the upper bound condition in \eqref{eq:upper_deriv_condit}. In $96.80\%$ of the cases, both \eqref{eq:lower_deriv_condit} and \eqref{eq:upper_deriv_condit} are satisfied, justifying our choice to impose Assumption~\ref{ass:bounded_projection_condition}. For one example of a network, the evolution of the states $x(k)$ and the corresponding bounds $\alpha_{\min}$, and $\alpha_{\max}$ can be observed in Fig.~\ref{fig:one_dynamics}. Initial opinions are uniformly randomly sampled from the interval $x(0)\in [0.1,\, 0.9]^n$, and the consensus lies between the bounds.
\begin{figure}
    \centering
    \includegraphics[width=0.5\linewidth]{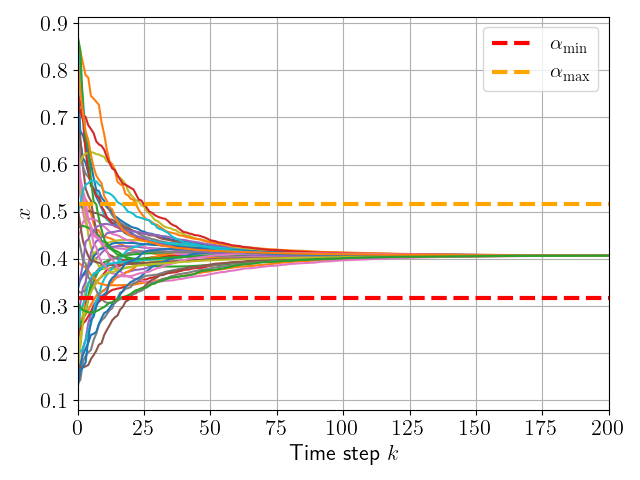}
    \caption{Evolution of $\underline{\theta}(k)$ and $\overline{\theta}(k)$}
    \label{fig:one_dynamics}
\end{figure}
We further observe that in all the experiments in which Assumption~\ref{ass:bounded_projection_condition} is not satisfied, the resulting consensus values obtained by solving \eqref{eq:opt_problem_proof} and \eqref{eq:opt_constr_proof} still lie within the prescribed~limits.

\subsubsection*{Uniformly random $x(0)$ with a uniform random model for~$\gamma(k)$}



The initial opinion of the agents, \(x(0)\), is uniformly randomly generated within the range \([0.1, 0.9]\), and we uniformly randomly generate the values of $\gamma_i(k)\in[\underline\gamma_i, \, \overline{\gamma}_i]$ for $i \in V$ and for every time~step $k$ until consensus is reached. Using randomly generated realizations of \(\gamma_i(k)\) for \(i \in V\), \eqref{eq:lower_deriv_condit} is satisfied in \(86.33\%\) of the cases, while~\eqref{eq:upper_deriv_condit} holds in \(85.61\%\) of the cases. Both conditions are simultaneously satisfied in \(76.71\%\) of the cases. As before, the consensus value remains within the computed bounds in all the simulations.

\subsubsection*{$\beta$-distribution of $x(0)$ with the stubbornness model for $\gamma(k)$ given in \eqref{eq:COCA_stubborness}}
\label{subsec:Cons_bound_scen3}
The initial opinion of the agents is sampled from a $\beta$-distribution \cite{johnson1995continuous} with parameters $\beta_{a}=2$ and $\beta_{b} = 5$ that is biased towards $0$ opinion on the range \([0.1, 0.9]\).
Again we consider the stubbornness function in~\eqref{eq:COCA_stubborness}.
Given this distribution of $x(0)$, \eqref{eq:lower_deriv_condit} is satisfied in $99\%$ of the cases, while \eqref{eq:upper_deriv_condit} holds in $99\%$ of the cases. Both conditions are simultaneously satisfied in $98.18\%$ of the simulations. 

The distribution of $\alpha_{\min}-\alpha$ and $\alpha_{\max}-\alpha$ can be observed in Fig.~\ref{fig:beta_distrib_biased}. For all cases $\alpha_{\min}-\alpha\leq 0$ and $\alpha_{\max}-\alpha\geq 0$, i.e. the bounds are correct. We can observe that since $x(0)$ is biased towards $0$, the variance of $\alpha_{\min}-\alpha$ is smaller than that of $\alpha_{\max}-\alpha$.



Fig.~\ref{fig:random_vs_stubborn_gamma} shows the distribution of the difference between the lower and upper bounds for all $3$ scenarios. 
In most cases, the estimated bounds are close to each other, with a mean difference of $0.19$ in the first two scenarios. In the third scenario, this mean difference is further reduced to $0.1$, indicating a more concentrated consensus~range. The bounds obtained using Theorem~\ref{th:lower_upper_consensus} have an average range of $0.19$ for Scenarios 1–2 and $0.1$ for Scenario 3, considerably narrower than with the method in \cite{ren2005consensus, chowdhury2016continuous}, which uses $\max(x(0))-\min(x(0))$, in our case lead to a range of $0.8$.



\begin{figure}[htbp]
    \centering
    \begin{minipage}[b]{0.49\linewidth}
        \centering
        \includegraphics[width=\linewidth]{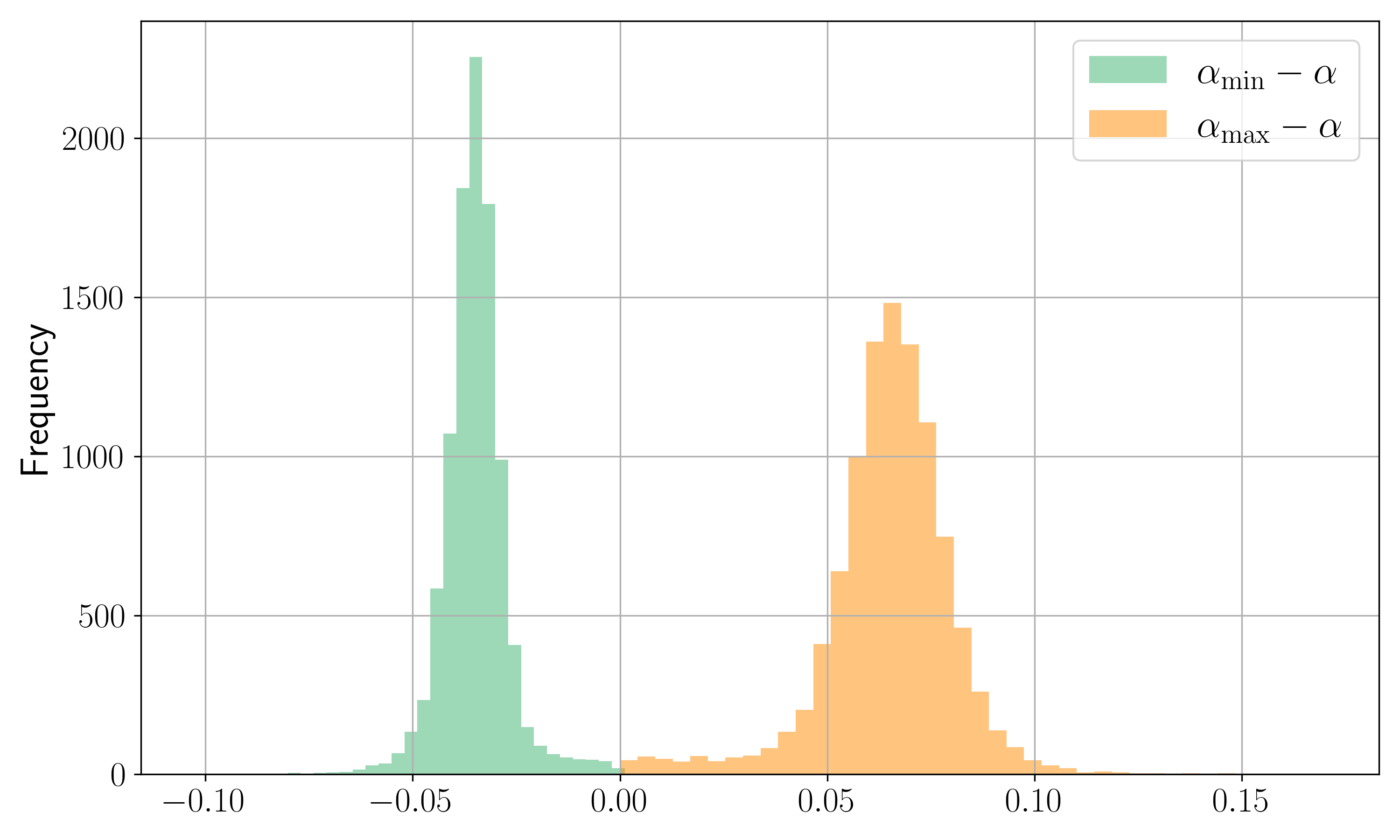}
        \caption{Distributions of $\alpha_{\min}-\alpha$, $\alpha_{\max}-\alpha$ for biased $x(0)$.}
        \label{fig:beta_distrib_biased}
    \end{minipage}%
    \hfill
    \begin{minipage}[b]{0.49\linewidth}
        \centering
        \includegraphics[width=\linewidth]{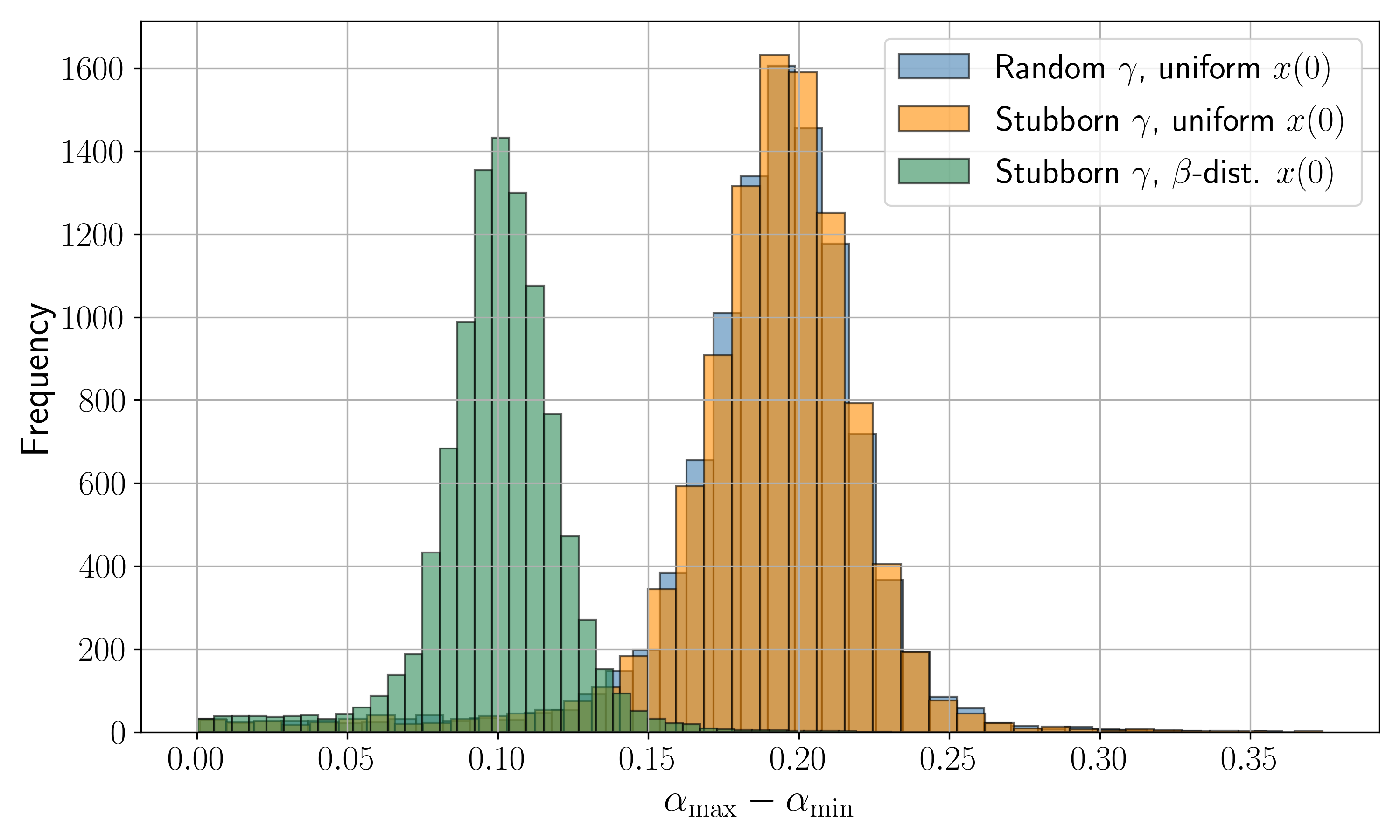}
        \caption{Histogram of $\alpha_{\max}-\alpha_{\min}$.}
        \label{fig:random_vs_stubborn_gamma}
    \end{minipage}
\end{figure}

\section{Consensus Control}
\label{sec:consensus_control}

Next, we will exploit the previously obtained consensus bounds to develop strategies to control the agents' states. 
We begin with the method and analytical results in Section~\ref{sec:control_analytical}, then continue with the experimental setup in Section~\ref{sec:control_experimental_setup}, and finally provide the experimental results in Section~\ref{sec:experimental_results}.

\subsection{Analytical results}
\label{sec:control_analytical}

The control input often shapes agents’ initial opinions quickly, since direct interventions such as targeted marketing result in rapid responses. In contrast, network-driven changes evolve gradually \cite{jia2015opinion, moruarescu2020space}.
Therefore, we consider the following control protocol adapted from~\cite{moruarescu2020space}:
\bea
x_u(0) &= d\,u + \mathrm{diag}(x(0))(\mathbf{1}-u),\\
u_i& \in [0, \overline{u}], \quad \forall i\in V,\\
\mathrm{with}\quad 0&\leq \overline{u} \leq 1,\,
\sum_{i=1}^n u_i\leq B
\label{eq:control_p0}
\eea
Here $x(0)$ is the initial opinion, $x_u(0)$ denotes the opinions of the agents after being instantaneously influenced by marketing campaign, $d \in \{0, 1\}$ represents the desired consensus value, $u = [ u_1 \,, u_2 \,,...,\, u_n ]^T$ is the control input vector, $\overline{u}$ indicates the maximum allowable value of any $u_i$ for $i\in V$, and $B$ denotes the total available budget of the marketing campaign. Therefore, the campaign occurs once, at time step $0$, following which we have the usual uncertain opinion dynamics:
\bea
x_u(k+1) =&\big(I+\mathrm{diag}(\gamma(k)L\big)x_u(k)
\label{eq:dynamics_with_control}
\eea
Our objective is to determine an allocation strategy of the control budget  that minimizes $|d - \alpha|$, which means we aim to bring the consensus value $\alpha$ as close as possible to the desired target $d$. However, since we do not have the exact value of  $\alpha$, we instead exploit our bounds to minimize $d - \alpha_{\min}$ (i.e., maximize $\alpha_{\min}$) when $d = 1$, or respectively minimize $ \alpha_{\max}-d$ (i.e., minimize $\alpha_{\max}$) when $d = 0$. To do this, we formulate the following theorem.
\begin{theorem}
\label{th:control_nonlin_alloc}
Consider the objective function:
\bea
J(u,\gamma, d):= 
\frac{\nu^T\mathrm{diag}\big(\frac{1}{\gamma}\big)}{\nu^T\mathrm{diag}\big(\frac{1}{\gamma}\big)\mathbf{1}}\big(d\,u + \mathrm{diag}(x(0))(\mathbf{1}-u)\big)
\eea
Under Assumption~\ref{ass:bounded_projection_condition}, the solutions to the problems $\underset{u}{\max}\, \alpha_{\min}$ for $d=1$, and $\underset{u}{\min}\,\alpha_{\max}$ for $d=0$, are given by solving: 

\bea
d=1:\quad \underset{u}{\max}&\Big( \underset{\gamma}{\min} \,J(u,\gamma, 1)\Big) 
\label{eq:control_d_1}
\eea
\bea
d=0: \quad \underset{u}{\min}&\Big( \underset{\gamma}{\max} \,J(u,\gamma, 0)\Big) 
\label{eq:control_d_0}
\eea
both subject to: 
\bea
\gamma_i&\in [\underline{\gamma}_i,\, \overline{\gamma}_i],\quad 
u_i \in [0, \overline{u}], \quad \forall i\in V, \quad 
\sum_{i=1}^n u_i\leq B
\label{eq:control_constraints}
\eea

\end{theorem}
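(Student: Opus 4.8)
The plan is to recognize that the controlled dynamics \eqref{eq:dynamics_with_control} is nothing but the uncertain consensus dynamics \eqref{eq:coca_model_matrix_form} initialized at the shifted state $x_u(0)$ rather than $x(0)$. Consequently, for any fixed feasible control $u$, Theorem~\ref{th:lower_upper_consensus} applies verbatim with $x(0)$ replaced by $x_u(0)$, yielding the consensus bounds $\alpha_{\min}(u) = \underline{\nu}^T x_u(0)$ and $\alpha_{\max}(u) = \overline{\nu}^T x_u(0)$. The first step, therefore, is simply to substitute $x_u(0) = d\,u + \mathrm{diag}(x(0))(\mathbf{1}-u)$ into the bound characterization and track how the bounds depend on $u$.

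The second step is to identify $J(u,\gamma,d)$ with the fractional expression that defines the bounds in \eqref{eq:optimization_problem}. Indeed, $J(u,\gamma,d) = \frac{\nu^T\mathrm{diag}\Big(\frac{1}{\gamma}\Big)}{\nu^T\mathrm{diag}\Big(\frac{1}{\gamma}\Big)\mathbf{1}} x_u(0)$ is exactly the quantity whose minimum and maximum over $\gamma$ produce $\underline{\nu}^T x_u(0)$ and $\overline{\nu}^T x_u(0)$, by the very definition of $\gamma_*$ and $\gamma^*$. Hence $\alpha_{\min}(u) = \min_{\gamma} J(u,\gamma,d)$ and $\alpha_{\max}(u) = \max_{\gamma} J(u,\gamma,d)$, where the inner optimization ranges over $\gamma_i \in [\underline{\gamma}_i, \overline{\gamma}_i]$.

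The final step translates the control goals into nested optimizations. Since $\alpha \geq \alpha_{\min}(u)$ always holds, for $d=1$ the best guaranteed push of the consensus toward the target is obtained by maximizing the lower bound over $u$, i.e. solving $\max_u \alpha_{\min}(u) = \max_u \min_{\gamma} J(u,\gamma,1)$; symmetrically, for $d=0$ we have $\alpha \leq \alpha_{\max}(u)$ and we minimize the upper bound, giving $\min_u \max_{\gamma} J(u,\gamma,0)$. Imposing the budget and per-agent constraints from \eqref{eq:control_p0} on $u$ alongside the uncertainty bounds on $\gamma$ yields precisely \eqref{eq:control_d_1}--\eqref{eq:control_constraints}.

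The main obstacle I anticipate is justifying that Theorem~\ref{th:lower_upper_consensus} may be invoked at the shifted initial condition. The eigenvectors $\underline{\nu}, \overline{\nu}$ --- and with them the inner minimizers and maximizers $\gamma_*, \gamma^*$ --- depend on the initial state through the optimization \eqref{eq:optimization_problem}, so as $u$ varies the entire bound construction moves with it. The clean way to handle this is to treat $u$ as fixed in the inner problem and to require, as the theorem's hypothesis via Assumption~\ref{ass:bounded_projection_condition}, that the projection conditions \eqref{eq:lower_deriv_condit}--\eqref{eq:upper_deriv_condit} hold for the controlled trajectory $x_u(k)$ starting from $x_u(0)$, for every feasible $u$. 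Under that reading the substitution is legitimate and the equivalence is immediate; the remaining content is the bookkeeping that $J$ reproduces the fractional bound expression.
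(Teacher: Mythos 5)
Your proposal is correct and follows essentially the same route as the paper's proof: apply Theorem~\ref{th:lower_upper_consensus} at the shifted initial condition $x_u(0)$, identify the resulting fractional bound expression with $J(u,\gamma,d)$, and wrap the outer optimization over $u$ around the inner one over $\gamma$. Your closing observation---that Assumption~\ref{ass:bounded_projection_condition} must be read as holding for the controlled trajectory started at $x_u(0)$ for every feasible $u$---is a point the paper leaves implicit, and making it explicit only strengthens the argument.
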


\begin{proof}

Consider $d=1$ first. Theorem~\ref{th:lower_upper_consensus} provides the lower and upper bounds $\alpha_{\min}$ and $\alpha_{\max}$ for an arbitrary initial condition $x(0)$. 
By applying Theorem~\ref{th:lower_upper_consensus} and \eqref{eq:optimization_problem} to find the lower bound $\alpha_{\min}$ for the particular initial condition $x_u(0)$ modified by $u$, we get:
\beas
\underset{\gamma}{\min}\quad  \frac{\nu^T \mathrm{diag}\big(\frac{1}{\gamma}\big) x_u(0)}{\nu^T \mathrm{diag}\big(\frac{1}{\gamma}\big)}
\eeas
Our objective is to find a control input $u$ that maximizes $\alpha_{\min}$. 
Using \eqref{eq:control_p0} to write $x_u(0)$ explicitly, this outer optimization problem writes as
\beas
\underset{u}{\max}&\bpm\underset{\gamma}{\min}\, \frac{\nu^T \mathrm{diag}\big(\frac{1}{\gamma}\big) \big(d\,u + \mathrm{diag}(x(0))(\mathbf{1}-u)\big)}{\nu^T \mathrm{diag}\big(\frac{1}{\gamma}\big)}\epm = \underset{u}{\max}\big(\underset{\gamma}{\min}\, J(u,\gamma, 1)\big) 
\eeas
Symmetrically, we obtain \eqref{eq:control_d_0} for $d=0$.
\end{proof}
Due to the presence of coupled terms, such as $\displaystyle\frac{1}{\gamma_i}u_i$, finding optimal solutions to the problems in Theorem~\ref{th:control_nonlin_alloc} is computationally challenging; therefore, in the following corollary, we introduce a computationally efficient approximate~solution. 
    
\begin{corollary}
\label{cor:lp_formulat}

Consider the objective function:
\bea
\tilde{J}(\tilde{u},\tilde{\phi}, d) := & \sum_{i=1}^n\nu_i \tilde{\phi}_i x_i(0) + \nu_i\tilde{u}_i (d-x_i(0))
\eea
where:
\bea
\phi_i = \frac{1}{\gamma_i}, \quad
\tilde{\phi}_i := \phi_i\chi, \quad \tilde{u}_i := \phi_iu_i\chi,\quad 
\chi := \frac{1}{\nu^T\phi}
\label{eq:change_of_vars}
\eea
for all $i=1,..,n$. Relaxed solutions to the problems $\underset{u}{\max}\, \alpha_{\min}$ for $d=1$, and $\underset{u}{\min}\,\alpha_{\max}$ for $d=0$, are  given respectively by:
\bea
d=1: \quad \underset{\tilde{u},\tilde{\phi},\chi}{\max}\quad \tilde{J}(\tilde{u},\tilde{\phi},1)
\label{eq:obj_fun_control_lp_d_1}
\eea
\bea
d=0: \quad \underset{\tilde{u},\tilde{\phi}, \chi}{\min}\quad \tilde{J}(\tilde{u},\tilde{\phi},0) 
\label{eq:obj_fun_control_lp_d_0}
\eea
both subject to 
\bea
\nu^T \tilde{\phi}&=1, \quad
A_{\phi}\tilde{\phi}\leq b_{\phi}\chi, \quad
A_u\tilde{u}  &\leq b_u \chi
\label{eq:obj_fun_lin_constr}
\eea
\bea
\sum_{i=1}^{n}\frac{\tilde{u_i}}{\underline{\phi}_i} \leq B \chi
\label{eq:obj_fun_nl_constr}
\eea
where 
\bea
A_{\phi} =& \bbm I\\-I
\ebm, \quad 
b_{\phi} = \bbm \overline{\phi}_1 & \cdots & \overline{\phi}_n & -\underline{\phi}_1 & \cdots & -\underline{\phi}_n  \ebm^T,\quad
A_u = I, \quad b_u = \bbm \overline{u}\\...\\ \overline{u}
\ebm
\eea 

\end{corollary}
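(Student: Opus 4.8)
The plan is to derive the LP of the corollary from the nonlinear programs of Theorem~\ref{th:control_nonlin_alloc} in two moves: an exact algebraic transformation of the objective, followed by a conservative linearization of the constraints that couple the control and the uncertainty. I would start by writing $J(u,\gamma,d)$ in terms of $\phi_i=\frac{1}{\gamma_i}$. Expanding the numerator componentwise, the post-campaign state $x_u(0)$ has entries $x_i(0)+u_i(d-x_i(0))$, so that $\nu^T\mathrm{diag}(\frac{1}{\gamma})x_u(0)=\sum_i\nu_i\phi_i x_i(0)+\sum_i\nu_i\phi_i u_i(d-x_i(0))$, while the denominator is exactly $\nu^T\phi$. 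The objective is therefore a linear-fractional function of $\phi$ whose only nonlinearity is the product $\phi_i u_i$.

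Next I would apply the Charnes--Cooper substitution already used in Section~\ref{sec:Optimization}, setting $\chi=\frac{1}{\nu^T\phi}$, $\tilde{\phi}_i=\phi_i\chi$ and, crucially, $\tilde{u}_i=\phi_i u_i\chi$. Multiplying numerator and denominator of $J$ by $\chi$ turns the denominator into $\nu^T\phi\,\chi=1$, which is exactly the normalization $\nu^T\tilde{\phi}=1$ of \eqref{eq:obj_fun_lin_constr}, and turns the numerator into $\sum_i\nu_i\tilde{\phi}_i x_i(0)+\sum_i\nu_i\tilde{u}_i(d-x_i(0))=\tilde{J}(\tilde{u},\tilde{\phi},d)$. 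The point of this step is that the single new variable $\tilde{u}_i$ absorbs the troublesome coupling $\phi_i u_i$, so $J=\tilde{J}$ holds \emph{exactly} as an identity between the original and transformed variables. The box $\gamma_i\in[\underline{\gamma}_i,\overline{\gamma}_i]$ becomes $\phi_i\in[\underline{\phi}_i,\overline{\phi}_i]$ and, after scaling by $\chi>0$, the exact linear constraint $A_{\phi}\tilde{\phi}\le b_{\phi}\chi$.

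The main obstacle is that the control constraints $u_i\le\overline{u}$ and $\sum_i u_i\le B$ become nonlinear once expressed through the new variables, since $u_i=\frac{\tilde{u}_i}{\tilde{\phi}_i}=\frac{\tilde{u}_i}{\phi_i\chi}$. Rather than encode them exactly, I would replace each by a linear \emph{sufficient} condition and prove that the condition implies the original, using the natural sign restrictions $\chi>0$, $\tilde{u}_i\ge0$. For the per-agent cap, imposing $\tilde{u}_i\le\overline{u}\,\chi$ (the row $A_u\tilde{u}\le b_u\chi$) yields $u_i=\frac{\tilde{u}_i}{\phi_i\chi}\le\frac{\overline{u}}{\phi_i}\le\overline{u}$, where the last inequality uses $\phi_i=\frac{1}{\gamma_i}\ge1$, valid here because $\gamma_i\le\overline{\gamma}_i=\frac{\overline{\omega}}{n_i}\le1$ under \eqref{eq:bounded_nonlin}. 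For the budget, $\tilde{\phi}_i=\phi_i\chi\ge\underline{\phi}_i\chi$ together with $\tilde{u}_i\ge0$ gives $u_i=\frac{\tilde{u}_i}{\tilde{\phi}_i}\le\frac{\tilde{u}_i}{\underline{\phi}_i\chi}$, so that \eqref{eq:obj_fun_nl_constr}, i.e.\ $\sum_i\frac{\tilde{u}_i}{\underline{\phi}_i}\le B\chi$, forces $\sum_i u_i\le B$. Hence every feasible point of the LP maps, through $\phi_i=\frac{\tilde{\phi}_i}{\chi}$ and $u_i=\frac{\tilde{u}_i}{\tilde{\phi}_i}$, to a control $u$ satisfying all of \eqref{eq:control_constraints}, with matching objective values.

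Finally I would address the min--max structure: Theorem~\ref{th:control_nonlin_alloc} asks for $\underset{u}{\max}\,\underset{\gamma}{\min}$ (and $\underset{u}{\min}\,\underset{\gamma}{\max}$), whereas \eqref{eq:obj_fun_control_lp_d_1}--\eqref{eq:obj_fun_control_lp_d_0} optimize jointly over $(\tilde{u},\tilde{\phi},\chi)$. This replacement of the nested robust problem by a single joint optimization is precisely where the formulation becomes a \emph{relaxation} rather than an equivalence, and it is the step I would flag as the principal source of approximation: the joint program bounds the robust value and returns a control $u$ that is guaranteed feasible but need not be the exact robust optimizer. I would close by recovering $(\gamma,u)$ from the LP optimizer as above, noting that the conservative surrogates ensure the recovered allocation respects the cap and budget, which is all that is needed for it to be implementable; the tightness of the approximation is then assessed empirically, as in Section~\ref{sec:experimental_results}.
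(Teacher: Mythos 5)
Your proposal is correct and follows essentially the same route as the paper's proof: the Charnes--Cooper change of variables \eqref{eq:change_of_vars} renders the objective and box constraints linear exactly, and the budget constraint is replaced by the conservative surrogate \eqref{eq:obj_fun_nl_constr} using $\phi_i\geq\underline{\phi}_i$, which guarantees $\sum_i u_i\leq B$ at the cost of possibly underusing the budget. You go somewhat further than the paper's terse argument by explicitly verifying that $A_u\tilde{u}\leq b_u\chi$ implies the per-agent cap (via $\phi_i\geq 1$) and by flagging the replacement of the nested $\max_u\min_\gamma$ by a joint optimization as an additional source of relaxation --- both accurate observations that the paper leaves implicit.
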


\begin{proof}
Using the change of variables in \eqref{eq:change_of_vars}, the nonlinear optimization problems in \eqref{eq:control_d_1} and \eqref{eq:control_d_0} become linear. The budget constraint $\sum_{i=1}^n u_i\leq B$ is rewritten as
\beas
\sum_{i=1}^{n}\frac{\tilde{u_i}}{\tilde{\phi}_i} \leq B 
\eeas
which is transformed as follows
\beas
\sum_{i=1}^n \frac{\tilde{u}_i}{\tilde{\phi}_i} = \sum_{i=1}^n \frac{\tilde{u}_i}{\phi_i\chi}&\leq B\,\Longleftrightarrow \,
\sum_{i=1}^n \frac{\tilde{u}_i}{\phi_i}&\leq B\chi
\eeas
We know that $\phi_i\geq\underline{\phi}_i$, therefore:
\bea
\sum_{i=1}^n \frac{\tilde{u}_i}{\phi_i} \leq \sum_{i=1}^n \frac{\tilde{u}_i}{\underline{\phi}_i} &\leq B\chi
\label{eq:relaxed_constraint}
\eea
which is a linear constraint.
\end{proof}
\color{black}
The problem of Corollary~\ref{cor:lp_formulat} can be efficiently solved using linear programming.
It is a relaxation in the sense that, while the budget is not exceeded, it may not be fully utilized due to the relaxed constraint in \eqref{eq:relaxed_constraint}.

To illustrate the effectiveness of the allocation strategy defined in Corollary~\ref{cor:lp_formulat}, we next present a series of numerical examples, comparing to allocation strategies drawn from the literature. We begin by outlining the setup of the examples, then describe the baseline. Finally, we present and discuss the empirical results in~detail.


\subsection{Experimental setup and baselines}
\label{sec:control_experimental_setup}


The total available budget is given by \( B = n_b \, \overline{u} \), where \( n_b \) denotes the number of agents that can be influenced at the maximum control level \( \overline{u} \). We restrict the allocation to a binary decision: each agent receives either the maximum allowable control input or none at all. While Corollary~\ref{cor:lp_formulat} does not explicitly impose this binary restriction, the constraints in \eqref{eq:control_constraints} act on each decision variable individually, i.e., $\gamma_i \in [\underline{\gamma}_i, \,\overline{\gamma}_i]$ and $u_i\in [0,\, \overline{u}]$ for $i=1,...,n$. As a result, the optimal solution naturally occurs at either the minimum or maximum values of the decision variables, implicitly leading to binary inputs. Next, we consider three allocation strategies.

%
\subsubsection*{Brute-force allocation}

When the number of agents in the network is small, it is computationally feasible to enumerate all possible budget allocations and select one that maximizes \eqref{eq:control_d_1} or minimizes \eqref{eq:control_d_0} \cite{chen2011finite}.
\subsubsection*{Linear-dynamics baseline}
\label{sec:baseline}

We exploit the linear case in \cite{moruarescu2020space}, where agents are ranked according to their centrality and initial opinion, combined into the \emph{influence power} defined as:
\bea
\label{eq:space_influence_power}
\rho_i := \nu_i |d - x_i(0)|
\eea
Note that the centralities are given by the normalized left eigenvector $\nu$ corresponding to the $0$ eigenvalue of $L$.
Based on \eqref{eq:space_influence_power}, the optimal investment profile is given by Proposition 5.1 of \cite{moruarescu2020space}:
\bea
u_i =
\begin{cases} 
\overline{u} & \text{if } i \in \mathcal{I}_b \\
0 & \text{otherwise}
\end{cases}
\label{eq:agent_infl_power}
\eea
where \( \mathcal{I}_b \subseteq \{1, \dots, n\} \) denotes the set of \( n_b \) agents with the largest influence powers \( \rho_i \).
In the linear case of \cite{moruarescu2020space}, the term $\nu$ is constant. In contrast, here $\nu_{\gamma(k)}$ varies over time. As an approximation, we use $\nu_{\gamma(0)}$ to perform the allocation, with $\gamma(0)=\gamma(x(0))$ defined in \eqref{eq:COCA_stubborness}. While allocation \eqref{eq:agent_infl_power} is optimal in the linear case, in the nonlinear case considered in \eqref{eq:dynamics_with_control}, no such theoretical guarantees are available.
Nevertheless, once the budget allocation is determined,  Theorem~\ref{th:lower_upper_consensus} can still be used to bound the consensus value and assess its proximity to the desired opinion~\( d \).



\subsubsection*{Corollary~\ref{cor:lp_formulat}-based allocation}

Finally, we apply our proposed method based on Corollary~\ref{cor:lp_formulat}. Note that the linear-dynamics baseline works in two separate stages: the budget is first allocated, and only then are the lower bounds determined. Corollary~\ref{cor:lp_formulat}, by contrast, unifies these steps into a single, theoretically grounded procedure in which the budget is allocated explicitly to maximize the resulting lower bound.
As discussed in Section~\ref{sec:control_analytical}, after applying Corollary~\ref{cor:lp_formulat} once, it is possible that part of the budget remains unused. To guarantee full utilization, the remaining budget is redistributed among the agents that have not yet received the maximum allocation $\bar{u}$, by repeatedly applying Corollary~\ref{cor:lp_formulat} until the budget is fully exhausted. In the end, the entire budget is used, and each funded agent attains the maximum $\bar{u}$.
 


\subsection{Experimental results}
\label{sec:experimental_results}
We present a set of illustrative examples using the above presented three allocation strategies. For simplicity, we consider $d = 1$ in all examples, as the same approach can be applied symmetrically for $d = 0$. 

In the first example, we consider a small-scale network to analyze the linear-dynamics baseline and the Corollary~\ref{cor:lp_formulat}-based allocations in comparison to the brute-force allocation. Next, we examine a large-scale network with $510$ nodes, considering different sets of initial opinions to explore a variety of scenarios. In this large-scale network the brute-force allocation is not computationally feasible.



\subsubsection{Consensus control in small-scale networks}

We first focus on maximizing \eqref{eq:control_d_1}, for a directed graph with $12$ agents with the network topology in Fig.~\ref{fig:network_topology}. We generate $1,000$ initial conditions where each agent's state is uniformly randomly distributed in the interval $[0.1,\,0.9]$. We set $\overline{u}=0.2$ and $n_b=3$, so three agents get a budget $\overline{u}$. The values for $\gamma(k)$ in \eqref{eq:bounded_nonlin} are sampled from a uniform random distribution with a range given by $\underline{\omega}=0.03$ and $\overline{\omega}=0.25$. Table~\ref{tab:example_table} presents the average values of $\alpha_{\min}$, denoted by $\bar{\alpha}_{\min}$. The column $\alpha_{\min}\%$ presents the ratio between the linear-case or Corollary~\ref{cor:lp_formulat} based allocation with respect to the brute-force allocation.
We can observe that results using Corollary~\ref{cor:lp_formulat} $\alpha_{\min}$ are almost $6$ points larger than the linear baseline allocation.


\begin{table}[ht]
\centering
\caption{Comparison of budget allocation strategies}
\label{tab:example_table}
\begin{tabular}{lcc}
\toprule
\textbf{Allocation strategy} & $\bar{\alpha}_{\min}$ & $\alpha_{\min}\%$ \\
\midrule
Baseline with $\nu_{\gamma(0)}$ & 0.445 & 93.13\% \\
Corollary~\ref{cor:lp_formulat} & 0.473 & 98.91\% \\
Brute-force & 0.478 & 100\% \\
\bottomrule
\end{tabular}
\end{table}

\subsubsection{Consensus control in large-scale networks}





In this example, we consider a larger-scale directed BA graph consisting of $510$ agents. The initial opinions are sampled from $\beta$-distributions, considering  $\beta_{a}, \beta_{b} \in \{0.5,\,0.75,\,1,\,\ldots,\,3.5\}$, and all possible parameter pairs are considered. This range captures a wide variety of possible initial opinion distributions, see Fig.~\ref{fig:beta_distrib} for some examples.

\begin{figure}[htbp]
    \centering
    \begin{minipage}[b]{0.48\linewidth}
        \centering
        \includegraphics[width=\linewidth]{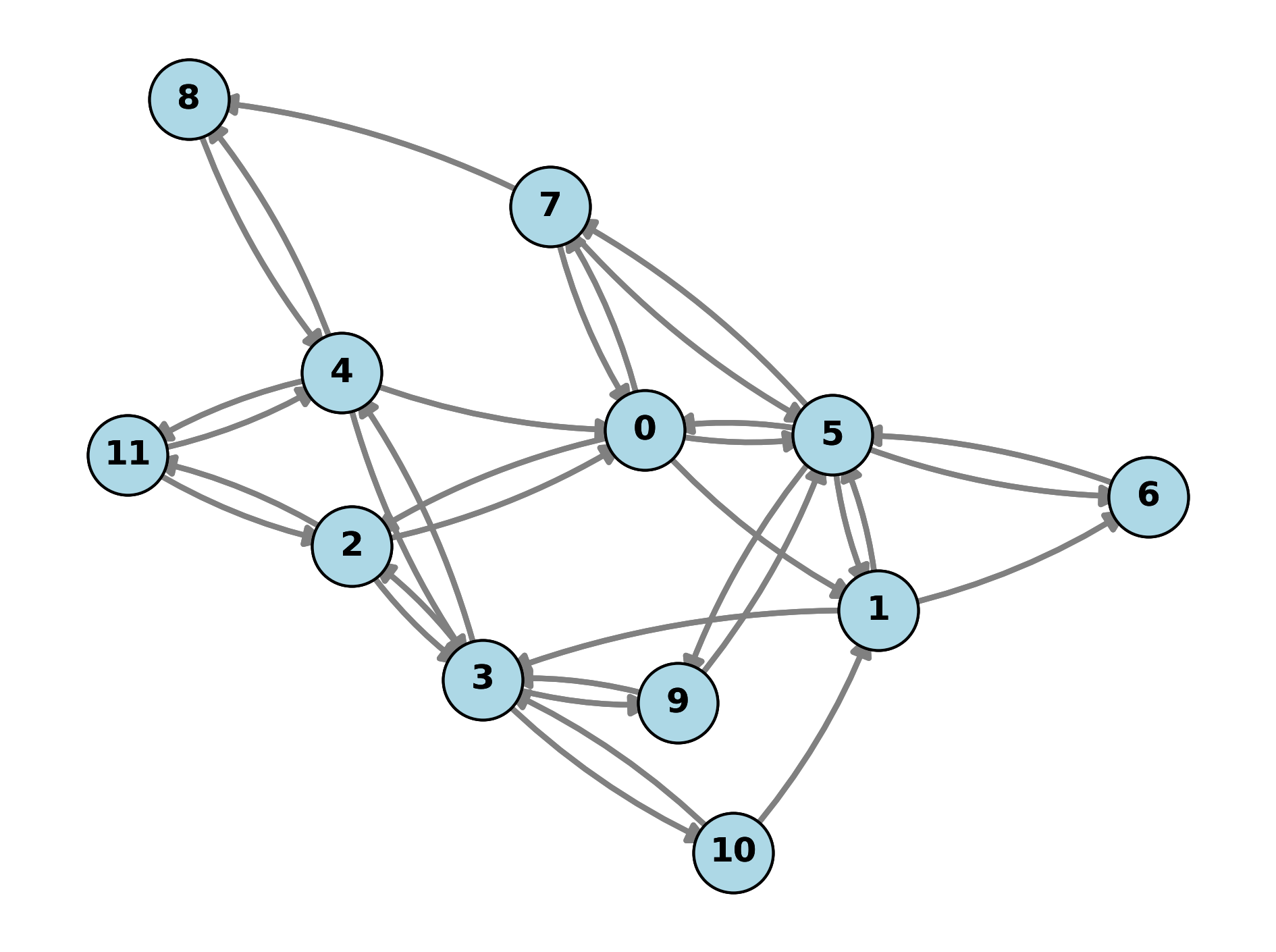}
        \caption{Network topology}
        \label{fig:network_topology}
    \end{minipage}\hfill
    \begin{minipage}[b]{0.48\linewidth}
        \centering
        \includegraphics[width=\linewidth]{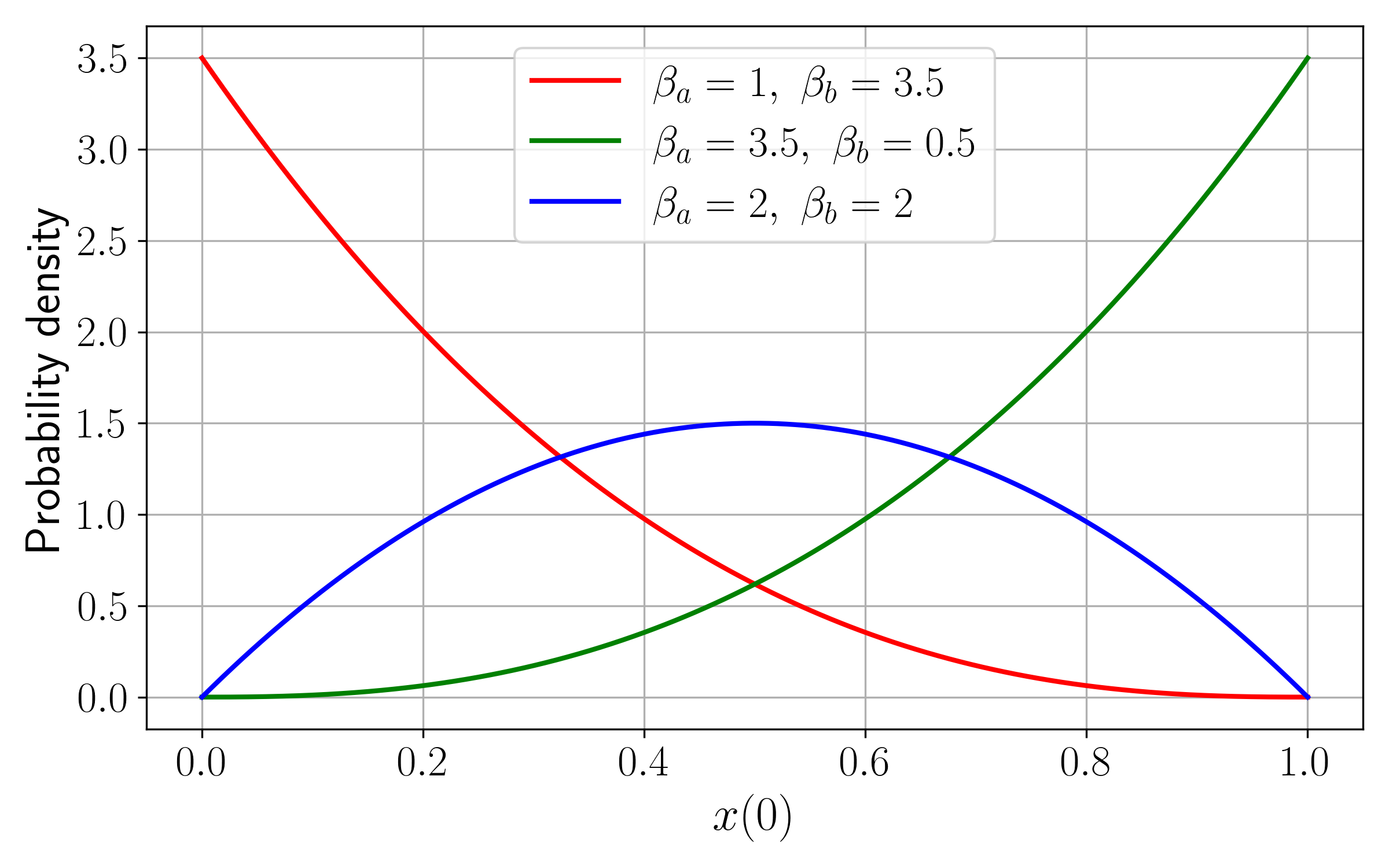}
        \caption{Examples of $\beta$-distributions}
        \label{fig:beta_distrib}
    \end{minipage}
\end{figure}


\begin{figure}[ht]
    \centering
    \includegraphics[width=0.52\linewidth]{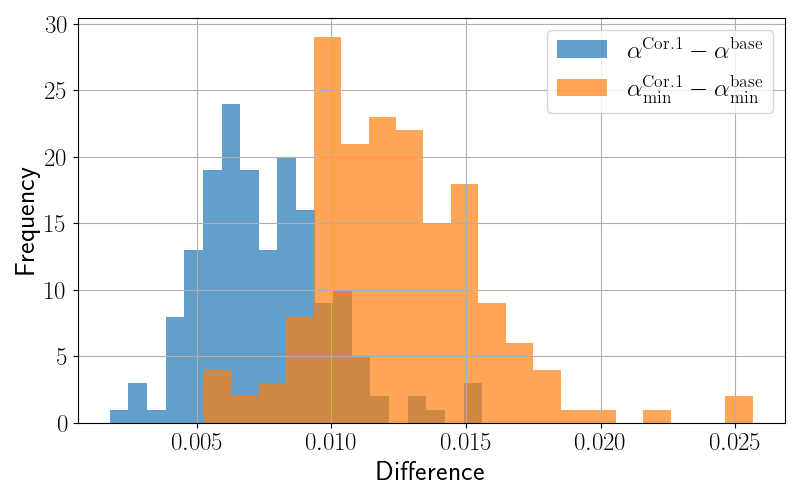}
    \caption{Differences of $\alpha$ and $\alpha_{\min}$ with Corollary~\ref{cor:lp_formulat} and baseline allocations}
    \label{fig:histogram_alloc_strategies}
\end{figure}

The maximum control input is $\overline{u} = 0.2$, and the maximum number of agents that receive marketing influence is $n_b=50$. Next, we examine two allocation strategies: baseline allocation and the allocation based on Corollary~\ref{cor:lp_formulat} (recall brute-force allocation is no longer feasible). Fig.~\ref{fig:histogram_alloc_strategies} shows the distribution of the differences $\alpha_{\min}^{\mathrm{Cor.1}} - \alpha_{\min}^{\mathrm{base}}$ and $\alpha^{\mathrm{Cor.1}} - \alpha^{\mathrm{base}}$, where  $\alpha_{\min}^{\mathrm{Cor.1}}$ and $\alpha_{\min}^{\mathrm{base}}$ denote the values of $\alpha_{\min}$ obtained using Corollary~\ref{cor:lp_formulat} and the baseline strategy, respectively. Similarly, $\alpha^{\mathrm{Cor.1}}$ and $\alpha^{\mathrm{base}}$ refer to the actual consensus values resulting by simulating to converge the network after applying the solution from Corollary~\ref{cor:lp_formulat} and the baseline. In all cases, the allocation based on Corollary~\ref{cor:lp_formulat} yields better values for $\alpha_{\min}$. 

An increase in $\alpha_{\min}$ by $0.02$ may appear marginal at first, but its impact in large-scale social networks can be substantial. For example, when the consensus value is interpreted as the proportion of agents leaning toward preference for a particular product, even a small shift can translate into meaningful business outcomes \cite{gelman2012probability}.

\section{Conclusions}
\label{sec:conclusions}

This paper presented a novel approach to approximate the consensus value for a class of nonlinear uncertain consensus dynamics. We began by introducing and motivating the model, followed by a set of analytical results for finding bounds on the consensus value.
Building on these bounds, we examined a control application, 
where the goal was to optimally allocate a marketing budget among agents 
to minimize the gap between the consensus and a target value. 
We developed an optimization strategy that simultaneously determines both the control input and the bounds, which yields better results than the baseline approach from the literature.





Future work could explore the specific COCA model further, with a focus on relaxing Assumption~\ref{ass:bounded_projection_condition}.
Another direction is to explore alternative opinion dynamics models, such as Hegselmann–Krause models \cite{rainer2002opinion} or continuous-opinion discrete-action models \cite{martins2008continuous}. It is also useful in practice to study settings with multiple marketing campaigns, where budgets must be allocated across both agents and time.

\section{Acknowledgement}
This work was supported by project DECIDE, no. 57/14.11.2022 funded
under the PNRR I8 scheme by the Romanian Ministry of Research, Innovation, and Digitisation.

\bibliographystyle{elsarticle-num}
\bibliography{z_opinion_dynamics}

@inproceedings{chowdhury2016continuous,
  title={Continuous opinions and discrete actions in social networks: a multi-agent system approach},
  author={Chowdhury, Nilanjan Roy and Mor{\u{a}}rescu, I-C and Martin, Samuel and Srikant, Sukumar},
  booktitle={2016 IEEE 55th Conference on Decision and Control (CDC)},
  pages={1739--1744},
  year={2016},
  organization={IEEE}
}

@article{Friedkin,
	Author = {N. E. Friedkin and E. C. Johnsen},
	Date-Added = {2019-02-06 15:12:32 +0100},
	Date-Modified = {2019-02-06 15:12:32 +0100},
	Journal = {Journal of Mathematical Sociology},
	Pages = {193-206},
	Title = {Social influence and opinions},
	Volume = {15},
	Year = {1990}}

@article{olfati2004consensus,
      title={Consensus problems in networks of agents with switching topology and time-delays},
  author={Olfati-Saber, Reza and Murray, Richard M},
  journal={IEEE Transactions on Automatic Control},
  volume={49},
  number={9},
  pages={1520--1533},
  year={2004},
  publisher={IEEE}
}

@article{clifford1973model,
  title={A model for spatial conflict},
  author={Clifford, Peter and Sudbury, Aidan},
  journal={Biometrika},
  volume={60},
  number={3},
  pages={581--588},
  year={1973},
  publisher={Oxford University Press}
}

@article{sznajd2000opinion,
  title={Opinion evolution in closed community},
  author={Sznajd-Weron, Katarzyna and Sznajd, Jozef},
  journal={International Journal of Modern Physics C},
  volume={11},
  number={06},
  pages={1157--1165},
  year={2000},
  publisher={World Scientific}
}

@article{martins2008continuous,
  title={Continuous opinions and discrete actions in opinion dynamics problems},
  author={Martins, Andr{\'e} CR},
  journal={International Journal of Modern Physics C},
  volume={19},
  number={04},
  pages={617--624},
  year={2008},
  publisher={World Scientific}
}

@article{loheac2022time,
  title={Time optimal control for a mobile robot with a communication objective},
  author={Loh{\'e}ac, J{\'e}r{\^o}me and Varma, Vineeth Satheeskumar and Mor{\u{a}}rescu, Irinel Constantin},
  journal={Mathematics and Computers in Simulation},
  volume={201},
  pages={96--120},
  year={2022},
  publisher={Elsevier}
}

@article{moruarescu2020space,
  title={Space--time budget allocation policy design for viral marketing},
  author={Mor{\u{a}}rescu, IC and Varma, Vineeth S and Bu{\c{s}}oniu, L and Lasaulce, Samson},
  journal={Nonlinear Analysis: Hybrid Systems},
  volume={37},
  pages={100899},
  year={2020},
  publisher={Elsevier}
}

@article{ren2005consensus,
  title={Consensus seeking in multiagent systems under dynamically changing interaction topologies},
  author={Ren, Wei and Beard, Randal W},
  journal={IEEE Transactions on Automatic Control},
  volume={50},
  number={5},
  pages={655--661},
  year={2005},
  publisher={IEEE}
}

@article{cooper1962programming,
  title={Programming with linear fractional functionals},
  author={Cooper, A Charnes WW and others},
  journal={Naval Research Logistics Quarterly},
  volume={9},
  number={3},
  pages={181--186},
  year={1962}
}

@article{barabasi1999emergence,
  title={Emergence of scaling in random networks},
  author={Barab{\'a}si, Albert-L{\'a}szl{\'o} and Albert, R{\'e}ka},
  journal={Science},
  volume={286},
  number={5439},
  pages={509--512},
  year={1999},
  publisher={American Association for the Advancement of Science}
}

@article{martin2019consensus,
  title={Consensus and influence power approximation in time-varying and directed networks subject to perturbations},
  author={Martin, Samuel and Mor{\u{a}}rescu, I-C and Ne{\v{s}}i{\'c}, D},
  journal={International Journal of Robust and Nonlinear Control},
  volume={29},
  number={11},
  pages={3485--3501},
  year={2019},
  publisher={Wiley Online Library}
}

@book{johnson1995continuous,
  title={Continuous Univariate Distributions},
  author={Johnson, Norman L and Kotz, Samuel and Balakrishnan, Narayanaswamy},
  volume={2},
  year={1995},
  publisher={John Wiley \& Sons}
}

@article{varma2018marketing,
  title={Marketing resource allocation in duopolies over social networks},
  author={Varma, Vineeth S and Mor{\u{a}}rescu, Irinel-Constantin and Lasaulce, Samson and Martin, Samuel},
  journal={IEEE Control Systems Letters},
  volume={2},
  number={4},
  pages={593--598},
  year={2018},
  publisher={IEEE}
}

@article{cho2018dynamics,
  title={Dynamics of uncertain and conflicting opinions in social networks},
  author={Cho, Jin-Hee},
  journal={IEEE Transactions on Computational Social Systems},
  volume={5},
  number={2},
  pages={518--531},
  year={2018},
  publisher={IEEE}
}

@article{quattrociocchi2014opinion,
  title={Opinion dynamics on interacting networks: media competition and social influence},
  author={Quattrociocchi, Walter and Caldarelli, Guido and Scala, Antonio},
  journal={Scientific reports},
  volume={4},
  number={1},
  pages={4938},
  year={2014},
  publisher={Nature Publishing Group UK London}
}

@article{mirtabatabaei2012opinion,
  title={Opinion dynamics in heterogeneous networks: Convergence conjectures and theorems},
  author={Mirtabatabaei, Anahita and Bullo, Francesco},
  journal={SIAM Journal on Control and Optimization},
  volume={50},
  number={5},
  pages={2763--2785},
  year={2012},
  publisher={SIAM}
}

@article{wang2021opinion,
  title={Opinion dynamics control and consensus in a social network},
  author={Wang, Chen and Mazalov, Vladimir V and Gao, Hongwei},
  journal={Automation and Remote Control},
  volume={82},
  pages={1107--1117},
  year={2021},
  publisher={Springer}
}

@article{gelman2012probability,
  title={What is the probability your vote will make a difference?},
  author={Gelman, Andrew and Silver, Nate and Edlin, Aaron},
  journal={Economic Inquiry},
  volume={50},
  number={2},
  pages={321--326},
  year={2012},
  publisher={Wiley Online Library}
}

@article{etesami2021open,
  title={Open-loop equilibrium strategies for dynamic influence maximization game over social networks},
  author={Etesami, S Rasoul},
  journal={IEEE Control Systems Letters},
  volume={6},
  pages={1496--1500},
  year={2021},
  publisher={IEEE}
}

@article{alkhorshid2024saturated,
  title={Saturated control of consensus value under energy and state constraints in multi-agent systems},
  author={Alkhorshid, Daniel Rostami and Tognetti, Eduardo Stockler and Mor{\u{a}}rescu, Irinel-Constantin},
  journal={Automatica},
  volume={169},
  pages={111822},
  year={2024},
  publisher={Elsevier}
}

@article{jond2024differential,
  title={Differential game strategies for social networks with self-interested individuals},
  author={Jond, Hossein B},
  journal={IEEE Transactions on Computational Social Systems},
  volume={11},
  number={3},
  pages={4426--4439},
  year={2024},
  publisher={IEEE}
}

@article{degroot1974reaching,
  title={Reaching a consensus},
  author={DeGroot, Morris H},
  journal={Journal of the American Statistical association},
  volume={69},
  number={345},
  pages={118--121},
  year={1974},
  publisher={Taylor \& Francis}
}

@article{rainer2002opinion,
  title={Opinion dynamics and bounded confidence: models, analysis and simulation},
journal={Journal of Artifical Societies and Social Simulation},  
author={Rainer, Hegselmann and Krause, Ulrich},
  year={2002}
}

@article{jia2015opinion,
  title={Opinion dynamics and the evolution of social power in influence networks},
  author={Jia, Peng and MirTabatabaei, Anahita and Friedkin, Noah E and Bullo, Francesco},
  journal={SIAM review},
  volume={57},
  number={3},
  pages={367--397},
  year={2015},
  publisher={SIAM}
}

@article{chen2011finite,
  title={Finite-time distributed consensus via binary control protocols},
  author={Chen, Gang and Lewis, Frank L and Xie, Lihua},
  journal={Automatica},
  volume={47},
  number={9},
  pages={1962--1968},
  year={2011},
  publisher={Elsevier}
}


\end{document}